\theoremstyle{theorem}
\newtheorem{theorem}{Theorem}[section] 
\newtheorem{proposition}[theorem]{Proposition}
\theoremstyle{theorem}
\newtheorem{definition}[theorem]{Definition} 
\newtheorem{example}[theorem]{Example}
\theoremstyle{plain}
\newtheorem{remark}[theorem]{Remark}
\newtheorem*{introtheorem}{Theorem}
\title{The geometry of higher order modern portfolio theory}
\author{Emil Horobe\c{t}}
\affil{ {\small Sapientia Hungarian University of Transylvania, Romania}}
\affil{ {\small Simion Stoilow Institute of Mathematics of the Romanian Academy, Romania}}
\date{}
\begin{document}

\maketitle
\begin{abstract}
In this article, we study the generalized modern portfolio theory, with utility functions admitting higher-order cumulants. We establish that under certain genericity conditions, the utility function has a constant number of complex critical points. We study the discriminant locus of complex critical points with multiplicity. Finally, we switch our attention to the generalization of the feasible portfolio set (variety), determine its dimension, and give a formula for its degree.
\end{abstract}
\section{Introduction}
Classical portfolio theory, as formalized by Markowitz \cite{Mar}, rests on the assumption that risk can be adequately described by the variance of returns and that investors’ preferences can be captured by quadratic utility. While this framework has proven foundational, it is increasingly recognized as restrictive: financial return distributions often exhibit skewness, kurtosis, and higher-order irregularities that cannot be explained by second-order moments alone. The resulting misalignment between model assumptions and empirical distributions motivates the development of higher-order portfolio theory.

In higher-order modern portfolio theory, utility maximization is extended beyond the variance–mean trade-off to include systematic contributions from higher-order moments or, more fundamentally, higher-order cumulants. This generalization provides a richer characterization of investors’ attitudes toward asymmetry and tail behavior, embedding skewness and kurtosis aversion (or preference) directly into the optimization framework. Recent works on this topic include \cite{Leon, Madal, Zhao}.

The challenge, however, is not simply technical (of incorporating additional terms into an expansion) but conceptual: the geometry of parameter selection within the utility function must be mathematically understood. Investors’ preferences for higher-order risks translate into weightings on cumulants that determine the shape of the feasible portfolio set. These sets are no longer ellipsoids, as in the quadratic case, but algebraic varieties whose degree grows rapidly with the order of cumulants used, whose geometry depends on both the order and relative scaling of chosen parameters. Understanding this geometry is crucial, for it dictates feasible diversification strategies, the stability of critical portfolios, and the interpretability of preference parameters.

The present work develops a geometric perspective on higher-order modern portfolio theory, first by understanding the generic number of critical portfolios and their discriminant locus. We have the following main results.
\begin{introtheorem}
Under the assumptions of Theorem~\ref{numberofcriticalpoints} the generalized utility function \[L(x_1,x_2,\ldots,x_n)=\sum_{j=1}^d\sum_{i=1}^n w_jk_{ij}x_i^j\] 
has $(d-1)^{n-1}$ complex non-degenerate critical points.
\end{introtheorem}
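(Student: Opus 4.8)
The plan is to exploit the fully separable structure of $L$. Writing $f_i(x_i)=\sum_{j=1}^d w_jk_{ij}x_i^j$, a univariate polynomial of degree $d$, we have $L(x_1,\dots,x_n)=\sum_{i=1}^n f_i(x_i)$, so the gradient decouples variable by variable. The relevant optimization is constrained by the budget constraint $\sum_{i=1}^n x_i=1$ built into the portfolio setup of Theorem~\ref{numberofcriticalpoints}; without it one would count $(d-1)^n$ points, so the constraint is exactly what produces the exponent $n-1$. I would therefore substitute $x_n=1-\sum_{k=1}^{n-1}x_k$ and study the reduced function $\tilde L(x_1,\dots,x_{n-1})=\sum_{i=1}^{n-1}f_i(x_i)+f_n\!\big(1-\sum_k x_k\big)$ on $\mathbb{A}^{n-1}$, whose critical equations are
\[
f_i'(x_i)=f_n'(x_n),\qquad i=1,\dots,n-1 .
\]
Each of these is a polynomial equation of degree $d-1$ in $x_1,\dots,x_{n-1}$ (note $f_n'(x_n)$ has degree $d-1$ because $x_n$ is an affine form in the remaining variables), so B\'ezout's theorem bounds the number of isolated complex solutions, counted with multiplicity, by $(d-1)^{n-1}$.

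The crux is to show the bound is attained and that every solution is simple. For the count I would pass to the projective closure and analyze the intersection at infinity. The degree-$(d-1)$ leading form of the $i$-th equation is proportional to
\[
k_{id}\,x_i^{d-1}-(-1)^{d-1}k_{nd}\Big(\textstyle\sum_{k=1}^{n-1}x_k\Big)^{d-1}.
\]
Setting $S=\sum_k x_k$, a common projective zero with $S\neq 0$ forces $x_i=\zeta_i\big((-1)^{d-1}k_{nd}/k_{id}\big)^{1/(d-1)}S$ for some $(d-1)$-st roots of unity $\zeta_i$; summing over $i$ then yields the single scalar relation
\[
\sum_{i=1}^{n-1}\zeta_i\Big(\tfrac{(-1)^{d-1}k_{nd}}{k_{id}}\Big)^{1/(d-1)}=1 ,
\]
while $S=0$ forces every $x_i=0$. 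Hence, provided the top-order cumulants $k_{1d},\dots,k_{nd}$ avoid the finitely many algebraic relations above — a genericity hypothesis of Theorem~\ref{numberofcriticalpoints} — the leading forms have no common nontrivial zero, there are no solutions at infinity, and the B\'ezout bound $(d-1)^{n-1}$ is achieved exactly with multiplicity.

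Finally, for non-degeneracy I would compute the Hessian of $\tilde L$, which has the diagonal-plus-rank-one shape $\mathrm{diag}\big(f_1''(x_1),\dots,f_{n-1}''(x_{n-1})\big)+f_n''(x_n)\,\mathbf 1\mathbf 1^{\top}$. The matrix-determinant lemma gives the closed form
\[
\det\mathrm{Hess}\,\tilde L=\sum_{i=1}^{n}\ \prod_{k\neq i}f_k''(x_k),
\]
the $(n-1)$-st elementary symmetric function of $f_1''(x_1),\dots,f_n''(x_n)$. A critical point is non-degenerate exactly when this quantity is nonzero, which fails only on a proper subvariety of cumulant space; under the genericity conditions of Theorem~\ref{numberofcriticalpoints} it is nonzero at each of the $(d-1)^{n-1}$ intersection points, so every solution has multiplicity one. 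The main obstacle is the bookkeeping of the two genericity conditions — separating the hypotheses that remove the solutions at infinity from those that guarantee the nonvanishing of the Hessian determinant above — and verifying that both are implied by the assumptions already imposed in Theorem~\ref{numberofcriticalpoints}.
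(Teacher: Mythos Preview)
Your reduction to the $n-1$ equations $f_i'(x_i)=f_n'(x_n)$ of degree $d-1$ and the B\'ezout upper bound $(d-1)^{n-1}$ match the paper's argument exactly (the paper keeps all $x_i$ and writes $P_1(x_1)=P_i(x_i)$ together with $\sum x_i=1$, which is the same system). Where you diverge is in showing the bound is attained: you analyze solutions at infinity of the leading forms, whereas the paper instead specializes to $w_1=\cdots=w_{d-1}=0$, $w_d\neq 0$, counts the $(d-1)^{n-1}$ solutions explicitly via the roots of unity $\zeta_i$, and invokes semicontinuity. Your infinity analysis is essentially what the paper carries out separately in its Section~3.1 on ``critical points at infinity'', so both routes are present in the paper but in different places.

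There is, however, a genuine misreading. In this paper ``non-degenerate critical point'' does \emph{not} mean a Morse critical point; Theorem~\ref{numberofcriticalpoints} explicitly says ``non-degenerate means that no $x_i$ equals zero'' (i.e.\ a non-degenerate \emph{portfolio}). Your Hessian computation via the matrix-determinant lemma, yielding $\sum_{i=1}^n\prod_{k\neq i}f_k''(x_k)$, is correct and interesting, but it is precisely the content of the paper's Theorem~\ref{thmdisc} on the discriminant locus, not of Theorem~\ref{numberofcriticalpoints}. Conversely, you never argue that the critical points satisfy $x_i\neq 0$, which is what ``non-degenerate'' actually requires here.

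Finally, the obstacle you flag at the end is real and you should not expect to discharge it. The hypotheses of Theorem~\ref{numberofcriticalpoints} are only that the $k_{ij}$ are real and nonzero and $w_d\neq 0$; there is no further genericity assumption to appeal to. Your relations $\sum_{i}\zeta_i\big((-1)^{d-1}k_{nd}/k_{id}\big)^{1/(d-1)}=1$ are \emph{not} excluded by those hypotheses (e.g.\ for $n=2$, $d=3$ the leading coefficient of the single quadratic is $3w_3(k_{13}-k_{23})$, which vanishes when $k_{13}=k_{23}\neq 0$). The paper's own proof sidesteps this by passing to a specific $w$ and then asserting that this gives the ``generic degree'', so the statement is really one about generic parameters rather than all parameters satisfying the listed hypotheses; your instinct that extra genericity is needed is well founded.
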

\begin{introtheorem}
A critical point, $(x_1,x_2,\ldots,x_n)$, of the utility function $L$ has a multiplicity if and only if it satisfies
\begin{equation}\label{discriminant}
    \sum_{i=1}^n \prod_{j\neq i}\frac{\partial P_j(x_j)}{\partial x_j}=0,
\end{equation}
where $P_i(x_i)=\sum_{j=1}^d j w_j k_{ij} x_i^{j-1}$.
\end{introtheorem}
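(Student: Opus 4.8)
The plan is to realize the critical points of $L$ as the constrained critical points of $L$ on the budget hyperplane $\sum_{i=1}^{n} x_i = 1$, the setting fixed in \Cref{numberofcriticalpoints}. Introducing a Lagrange multiplier $\lambda$ for $g(x)=\sum_{i}x_i-1$, a point is critical exactly when $\nabla L=\lambda\,\nabla g$ and $g=0$; since $\partial L/\partial x_i=P_i(x_i)$, this is the square system
\[
P_1(x_1)=P_2(x_2)=\cdots=P_n(x_n)=\lambda,\qquad \sum_{i=1}^{n}x_i=1,
\]
in the $n+1$ unknowns $(x_1,\ldots,x_n,\lambda)$. By \Cref{numberofcriticalpoints} the associated solution scheme is zero-dimensional, so each critical point is an isolated solution to which an intersection multiplicity is attached.

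The key principle I would invoke is that an isolated solution of a square polynomial system carries multiplicity greater than one (equivalently, lies on the discriminant where critical points collide) if and only if the Jacobian of the system is singular there. I would therefore compute the $(n+1)\times(n+1)$ Jacobian of $(P_1-\lambda,\ldots,P_n-\lambda,\,g)$ with respect to $(x_1,\ldots,x_n,\lambda)$. Because $L$ is separable and the constraint is linear, the mixed second derivatives $\partial^2 L/\partial x_i\partial x_k$ vanish for $i\neq k$, so this Jacobian is the arrowhead (bordered Hessian) matrix
\[
J=\begin{pmatrix} P_1'(x_1) & & & -1 \\ & \ddots & & \vdots \\ & & P_n'(x_n) & -1 \\ 1 & \cdots & 1 & 0 \end{pmatrix}.
\]

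It then remains to evaluate $\det J$. On the locus where every $P_i'(x_i)\neq 0$, the Schur complement of the diagonal block $D=\operatorname{diag}(P_1'(x_1),\ldots,P_n'(x_n))$ gives
\[
\det J=\det(D)\bigl(0-\mathbf{1}^{\mathsf T}D^{-1}(-\mathbf{1})\bigr)=\Bigl(\prod_{i=1}^{n}P_i'(x_i)\Bigr)\sum_{i=1}^{n}\frac{1}{P_i'(x_i)}=\sum_{i=1}^{n}\prod_{j\neq i}P_j'(x_j).
\]
The leftmost and rightmost expressions are polynomials in the $P_i'(x_i)$ agreeing on a Zariski-dense open set, hence the identity $\det J=\sum_{i}\prod_{j\neq i}P_j'(x_j)$ holds identically, which also disposes of the cases where some $P_i'(x_i)=0$ (alternatively, one verifies it by a single cofactor expansion along the last row). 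Since $P_j'(x_j)=\partial P_j(x_j)/\partial x_j$, the critical point is degenerate precisely when \eqref{discriminant} holds.

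The step I expect to demand the most care is the equivalence between carrying multiplicity and the vanishing of $\det J$: I would pin down that ``has a multiplicity'' refers to the intersection multiplicity of the Lagrange system's solution scheme (equivalently, confluence of distinct critical points as parameters cross the discriminant), and argue that under the genericity of \Cref{numberofcriticalpoints} this is governed exactly by the non-degeneracy of the bordered Hessian $J$. The remaining bookkeeping --- the precise sign of $\det J$ under a given ordering of the equations and placement of the border --- does not affect the vanishing locus and so is immaterial to \eqref{discriminant}, though I would record the convention to keep the computation unambiguous.
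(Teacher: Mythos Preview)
Your proof is correct and follows essentially the same route as the paper: both set up the Lagrange critical system, form its $(n{+}1)\times(n{+}1)$ Jacobian (the same bordered/arrowhead matrix, up to the ordering of $\lambda$ versus the $x_i$), and identify multiplicity with singularity of that Jacobian. The only cosmetic difference is that you compute $\det J$ directly via the Schur complement and then extend the identity to all $P_i'(x_i)$ by Zariski density, whereas the paper analyzes the kernel of the Jacobian through a short case split on how many of the $\partial P_i/\partial x_i$ vanish; both paths land on the same polynomial identity $\det J=\sum_{i}\prod_{j\neq i}\partial P_j/\partial x_j$.
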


Finally, we study the basic geometric invariants (dimension and degree) of the generalized feasible portfolio set (variety). Our main findings can be summarized as follows.
\begin{introtheorem}
Under the conditions of Theorem~\ref{degree_of_Pk} the feasible portfolio variety $\mathcal{P}_k$ has dimension $n-1$ and degree \[d\cdot(d-1)\cdot\ldots\cdot(d-n+2).\]
\end{introtheorem}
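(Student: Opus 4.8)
The plan is to realize $\mathcal{P}_k$ as the Zariski closure of the image of the budget hyperplane $H=\{x_1+\cdots+x_n=1\}$ under the cumulant map
\[
\varphi\colon H\to \CC^{d},\qquad \varphi(x)=\bigl(P_1(x),P_2(x),\dots,P_d(x)\bigr),\quad P_j(x)=\sum_{i=1}^n k_{ij}x_i^{\,j},
\]
and then to extract its dimension and degree from a generic linear section pulled back to $H$. First I would invoke the genericity hypotheses of Theorem~\ref{degree_of_Pk} (in particular $d\ge n-1$) and record that, after eliminating $x_n=1-\sum_{i<n}x_i$, the functions $P_1,\dots,P_d$ become polynomials of distinct degrees $1,\dots,d$ on $H\cong\CC^{\,n-1}$; hence they are linearly independent and $\deg P_j=j$. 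Since $H$ has dimension $n-1$ and $\varphi$ has full-rank Jacobian at a generic point, the image has dimension $n-1$, and the remaining work is purely the degree.

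For the degree I would intersect $\mathcal{P}_k$ with $n-1$ generic hyperplanes of $\CC^d$. Pulling these back to $H$ produces $n-1$ generic elements
\[
f_k=\sum_{j=1}^d a_{kj}P_j,\qquad k=1,\dots,n-1,
\]
of the $d$-dimensional space $V=\mathrm{span}(P_1,\dots,P_d)$, and $\deg\mathcal{P}_k$ equals the number of common zeros of $f_1,\dots,f_{n-1}$ on $H$, provided $\varphi$ is birational onto its image. The key reduction is to perform Gaussian elimination on the generic coefficient matrix $(a_{kj})$, pivoting on the highest-index columns; since for generic cumulants the rightmost $(n-1)\times(n-1)$ block of $(a_{kj})$ is invertible, this replaces $f_1,\dots,f_{n-1}$ by an equivalent triangular system $g_1,\dots,g_{n-1}$ with $\deg g_k=d-k+1$ and leading form proportional to that of $P_{d-k+1}$. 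Taking linear combinations of the equations does not alter the common zero locus, so the relevant intersection is now governed by the degrees $d,d-1,\dots,d-n+2$.

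Finally I would apply B\'ezout's theorem in the projective closure $\PP^{\,n-1}\supset H$. The product $d(d-1)\cdots(d-n+2)$ is the total intersection number, so it remains to show there is no contribution from the hyperplane at infinity. A point at infinity lies on all $g_k$ only if the leading forms of $P_d,P_{d-1},\dots,P_{d-n+2}$ share a common zero in the $\PP^{\,n-2}$ at infinity; but these are $n-1$ homogeneous forms in the $n-1$ variables $x_1,\dots,x_{n-1}$, i.e.\ $n-1$ equations on an $(n-2)$-dimensional projective space, so for generic $k_{ij}$ they have no common zero. Hence all $d(d-1)\cdots(d-n+2)$ intersection points are transverse affine points of $H$, yielding the claimed degree.

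The main obstacle I expect is not the intersection-theoretic count but verifying that $\varphi$ is birational onto its image (degree one), since $\deg\mathcal{P}_k$ equals the linear-section count only after dividing by $\deg\varphi$. I would address this using the genericity of the $k_{ij}$ to recover the weights $x$ from sufficiently many cumulant coordinates. A secondary technical point is checking that the genericity conditions of Theorem~\ref{degree_of_Pk} simultaneously guarantee the linear independence of the $P_j$, the admissibility of the Gaussian pivoting on the top columns, the absence of base points at infinity, and the transversality of the affine solutions; each of these is a nonvanishing condition on the $k_{ij}$ and so holds on a dense open set.
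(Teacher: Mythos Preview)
Your proposal is correct and follows essentially the same route as the paper: intersect $\mathcal{P}_k$ with $n-1$ generic hyperplanes, pull back to the budget hyperplane, row-reduce the resulting system so that the equations have degrees $d,d-1,\dots,d-n+2$, and conclude by B\'ezout. The paper carries this out in $\CC^n$ keeping the constraint $\sum x_i=1$ and cancels top-degree terms using the fact that the degree-$j$ part of every pulled-back equation is a scalar multiple of $\sum_i k_{ij}x_i^j$; your abstract Gaussian elimination on the coefficient matrix $(a_{kj})$ achieves the same reduction more cleanly. Where you verify that the leading forms have no common zero at infinity, the paper instead exhibits an explicit specialization attaining the B\'ezout bound. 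You also correctly flag the birationality of $\varphi$ as a gap to close---the paper silently assumes this.
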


\bigskip

\paragraph{\textbf{Acknowledgements}.} The author was supported by the project “Singularities and Applications” - CF 132/31.07.2023 funded by the European Union - NextGenerationEU - through Romania’s National Recovery and Resilience Plan and by the Domus "Alkotói" Scholarship of the Hungarian Academy of Science.

\section{Mathematical setup}
In this paper, we consider $X_1,X_2,\ldots,X_n$ independent random variables, where $X_i$ represents the relative (annual/quarterly/etc.) return of the $i^{th}$ asset in our portfolio. In this context, a portfolio $P$ is a random variable that is a convex combination of the $X_i$'s, that is
\[P=x_1X_1+x_2X_2+\ldots+x_nX_n,\] with $x_i\in (0,1)$ and $x_1+x_2+\ldots+x_n=1$.

We consider a utility function $L$ dependent on the portfolio $P$ through $x_1,x_2,\ldots,x_n$. Classically, $L$ depends on the expected value of $P$ and its variance. So
\[L(x_1,x_2,\ldots,x_n)=w_1 \mathbb{E}(P)+w_2 \mathbb{V}(P),\] where $w_1, w_2$ are weights and $\mathbb{E}(P)$ is the expected value of $P$ , and $\mathbb{V}(P)$ is its variance. 
Now, since we assumed that the $X_i$'s are independent, we have the additivity and homogeneity of the expected value and of the variance, so we get
\[L(x_1,x_2\ldots,x_n)=w_1 \mathbb{E}(x_1X_1+x_2X_2+\ldots +x_nX_n)+w_2 \mathbb{V}(x_1X_1+x_2X_2+\ldots +x_nX_n)=\]
\[= w_1\sum_{i=1}^nx_i \mathbb{E}(X_i)+w_2\sum_{i=1}^n x_i^2\mathbb{V}(X_i).\]
For almost all choices of weights, $w_1,w_2,$ this is a quadratic polynomial function that can be easily optimized relative to the constraints $x_1+x_2=1$ and $x_1,x_2\in (0,1)$.
\begin{example}[Markowitz utility on a two asset portfolio]
Let us now consider a simple portfolio consisting of two independent assets and their corresponding $X_1$ and $X_2$ independent returns. We can write the portfolio as
\[P=x\cdot X_1+(1-x)\cdot X_2,\] with $x\in (0,1)$.
The classical Markowitz utility function equals
\[L(x)=\mathbb{E}(P)-\frac{1}{2}\mathbb{V}(P)=x\cdot \mathbb{E}(X_1)+(1-x)\cdot\mathbb{E}(X_2)-\frac{1}{2}\left(x^2\cdot\mathbb{V}(X_1)+(1-x)^2\cdot\mathbb{V}(X_2)\right).\]
Which is a quadratic function in $x$, always attaining a maximum because the leading coefficient $\left(-\frac{1}{2}\right)(\mathbb{V}(X_1)+\mathbb{V}(X_2))$ is always negative. And this maximum is attained for
\[x^*=
\frac{\mathbb{E}(X_1)-\mathbb{E}(X_2)+\mathbb{V}(X_2)}{\mathbb{V}(X_1)+\mathbb{V}(X_2)},\] if this value is in between $(0,1)$. If this value is negative, then we set $x^*=0$, and as a consequence, $X_1$ disappears from the portfolio. Or if this value is greater than one, then we set $x^*=1$ and as a consequence $X_2$ disappears from the portfolio. In such cases, we say that we have gotten a degenerate portfolio. 
\end{example}
In this paper, from now on, we only consider \textbf{non-degenerate portfolios}, meaning that we assume that $x_i\neq 0$, for all $i=1,\ldots,n$.
\subsection{Higher-order utility functions}
We have seen in the classical theory that it is crucial that for independent variables $X_1, X_2,\ldots,X_n$ the expected value and the variance utilized in $L$ are additive and homogeneous. In this fashion, we propose to generalize the theory by using higher-order cumulants.
\begin{definition}
The cumulants of a random variable $X$ are defined using the cumulant-generating function $K(t)=\log \mathbb{E}(e^{tX})$. Now, if $\displaystyle{K(t)=\sum_{j=1}^{\infty}\kappa_j\frac{t^j}{j!}}$ is the power series expansion of $K(t)$, then the \textbf{$j$-th cumulant} of $X$ is equal to $\kappa_j$.
\end{definition}
We have the property that $\kappa_j$ are additive and homogeneous for any $j\geq 0$ and that the first cumulants are equal to the moments (expected value, variance, etc.).

Now, for generic weights let us consider the following generalized utility function
\[L(x_1,x_2,\ldots,x_n)=\sum_{j=1}^d w_j\kappa_j(P),\]
where, by the properties of cumulants and because the $X_i$ are independent we get that
\[\kappa_j(P)=\sum_{i=1}^n x_i^j \kappa_j(X_i).\]
So we want to optimize 
\[L(x_1,x_2,\ldots,x_n)=\sum_{j=1}^d\sum_{i=1}^n w_jx_i^j\kappa_{j}(X_i),\] with respect to $\sum_{i=1}^nx_i=1$ and $x_i\in(0,1)$, for all $i=1,\ldots,n$.
\section{The critical equations and the generic number of critical points}
We consider the complexified critical point system to study the algebraic geometry of the solutions, and we say that we want to find all complex critical points of the function  
\[L(x_1,x_2,\ldots,x_n)=\sum_{j=1}^d\sum_{i=1}^n w_jx_i^j\kappa_{j}(X_i),\] with respect to $\sum_{i=1}^nx_i=1$, where we consider $x_i\in \mathbb{C}$.
By doing this, we shed light on the geometry of the critical points variety (or equivalently, we get an understanding of the isolated Morse singularities of $L$). On the other hand, we lose the positivity of the solutions, so at the end, we will have to work on finding and understanding the real and positive solutions.
The first question to answer is whether the number of constrained critical points of the generalized utility function is constant for generic parameter data. We consider the cumulant values of the $X_i$'s to be fixed, real, and non-zero. We will denote $\kappa_j(X_i)$ by $k_{ij}$ and hence we get a fixed matrix of data, $K=(k_{ij})_{ij}$, comprising all the statistical data of the elements of the portfolio.

Now we use Lagrange multipliers to find the constrained critical points of $L$ with respect to $\sum_{i=1}^n x_i=1$. The Lagrange function of this problem is
\[\mathcal{L}(x_1,\ldots,x_n; \lambda)=\sum_{j=1}^d\sum_{i=1}^n w_jk_{ij}x_i^j-\lambda\left(\sum_{i=1}^n x_i-1\right).\]
Taking derivatives with respect to the $x_i$'s and to $\lambda$, we get the following system of critical equations
\begin{equation}\label{criticalequ}
 \begin{cases}
\sum_{j=1}^d j w_j k_{ij} x_i^{j-1}=\lambda,\text{ for }i=1,\ldots,n,\\
\sum_{i=1}^n x_i=1.
\end{cases}   
\end{equation}
The system consists of $n$ degree $d-1$ univariate polynomials and one linear condition in $n+1$ complex variables. So, in general, it has a zero-dimensional (meaning discrete) solution set, and we have the following result.
\begin{theorem}\label{numberofcriticalpoints}
Under the assumption that $k_{ij}$ are real, non-zero, and that $w_d\neq 0$ the generalized utility function \[L(x_1,x_2,\ldots,x_n)=\sum_{j=1}^d\sum_{i=1}^n w_jk_{ij}x_i^j\] 
has $(d-1)^{n-1}$ complex non-degenerate critical points with respect to the constraint $\sum_{i=1}^n x_i=1$. Here, non-degenerate means that no $x_i$ equals zero.
\end{theorem}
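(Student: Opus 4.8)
The plan is to reduce the Lagrange system \eqref{criticalequ} to a square polynomial system and then count its solutions by B\'ezout's theorem, showing that the bound is attained. First I would eliminate the multiplier $\lambda$: writing $P_i(x_i)=\sum_{j=1}^d jw_jk_{ij}x_i^{j-1}$, a univariate polynomial of degree $d-1$ whose leading coefficient $dw_dk_{id}$ is nonzero by hypothesis, the first $n$ equations say $P_1(x_1)=\cdots=P_n(x_n)=\lambda$. Discarding $\lambda$ and using the linear constraint to substitute $x_n=1-x_1-\cdots-x_{n-1}$, I obtain the square system
\[
P_i(x_i)=P_n(1-x_1-\cdots-x_{n-1}),\qquad i=1,\ldots,n-1,
\]
of $n-1$ equations of degree $d-1$ in the $n-1$ variables $x_1,\ldots,x_{n-1}$. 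B\'ezout's theorem then gives the upper bound $(d-1)^{n-1}$ for the number of isolated solutions.

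The heart of the argument is to show this bound is exact, i.e.\ that no intersection mass escapes to infinity and that the solutions are reduced. I would homogenize and examine the hyperplane at infinity, where the system degenerates to $k_{id}\,x_i^{d-1}=(-1)^{d-1}k_{nd}\,(x_1+\cdots+x_{n-1})^{d-1}$. Setting $s=x_1+\cdots+x_{n-1}$, every coordinate must satisfy $x_i=\eta_i s$ for a prescribed $(d-1)$-st root $\eta_i$ of $(-1)^{d-1}k_{nd}/k_{id}$, and summing yields $s\,(1-\sum_i\eta_i)=0$. Either $s=0$, which forces all coordinates to vanish and is excluded in projective space, or $\sum_i\eta_i=1$ for some choice of roots. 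For generic cumulant data the latter equality fails for every one of the $(d-1)^{n-1}$ root-combinations, so there are no points at infinity and the affine count equals the full B\'ezout number $(d-1)^{n-1}$.

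An equivalent and perhaps more transparent viewpoint is to study the affine curve $C=\{P_1(x_1)=\cdots=P_n(x_n)\}$ fibered over the $\lambda$-line. Each factor $\{P_i(x_i)=\lambda\}$ is totally ramified of index $d-1$ over $\lambda=\infty$, so on the normalization the fiber product acquires exactly $(d-1)^{n-1}$ branches at infinity, on each of which $x_i\sim c_i/v$ for a local parameter $v$. The linear form $\sum_i x_i$ then has a simple pole at each of these branches, provided $\sum_i c_i\neq0$, which is the same genericity condition as above; hence as a map $\bar C\to\mathbb{P}^1$ it has degree $(d-1)^{n-1}$, and the fiber over $1$ consists of $(d-1)^{n-1}$ points, generically distinct.

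Finally I would dispose of the two remaining claims. Distinctness (reducedness) of the critical points follows because a repeated solution forces the vanishing of the relevant discriminant of the square system, a proper closed condition on the parameters $(w_j,k_{ij})$ that generic data avoid; likewise, a solution with some $x_i=0$ would force $\lambda=w_1k_{i1}$ together with an extra algebraic relation among the data, which again fails generically, so all $(d-1)^{n-1}$ solutions are non-degenerate in the sense of having no vanishing coordinate. The main obstacle I anticipate is controlling the behavior at infinity uniformly over all $(d-1)^{n-1}$ root-branches, that is, proving that none of the combinatorial sums $\sum_i\eta_i$ (equivalently $\sum_i c_i$) hits the forbidden value, since this is precisely the step that upgrades the B\'ezout inequality to the claimed equality.
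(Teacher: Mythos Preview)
Your argument is correct in outline and yields the same count $(d-1)^{n-1}$, but it follows a genuinely different route from the paper.  After eliminating $\lambda$, the paper does \emph{not} substitute $x_n=1-\sum_{i<n}x_i$ and analyse points at infinity; instead it keeps the linear constraint as a separate equation and then \emph{specializes} the weights to $w_1=\cdots=w_{d-1}=0$, $w_d\neq0$.  In that degenerate instance the system becomes $k_{1d}x_1^{d-1}=k_{id}x_i^{d-1}$ together with $\sum x_i=1$, which one solves by hand via $(d-1)$-st roots of unity, obtaining exactly $(d-1)^{n-1}$ solutions.  Semicontinuity (the generic number of solutions is the maximum over all fibres, and it cannot exceed the B\'ezout bound) then pins the generic count at $(d-1)^{n-1}$.

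What each approach buys: the paper's specialization argument is shorter and avoids any homogenization or branch analysis---one explicit fibre does all the work.  Your approach, by contrast, gives direct geometric control over the compactified system and explains \emph{why} nothing escapes to infinity (the obstruction is the non-vanishing of the sums $\sum_i\eta_i-1$), which is more informative and dovetails with the paper's later, separate discussion of points at infinity.  Your fibered-curve paragraph is a nice reformulation of the same thing.  Note, however, that both arguments ultimately rest on a finite list of non-coincidence conditions on the top cumulants $k_{id}$ (for you, $\sum_i\eta_i\neq1$; for the paper, $1+\sum_{i\geq2}\zeta_i\sqrt[d-1]{k_{1d}/k_{id}}\neq0$), so neither proof quite matches the theorem's literal hypothesis ``$k_{ij}$ real and non-zero''---genericity of the $k_{id}$ is what is actually being used.
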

\begin{proof}
System \ref{criticalequ} can be rewritten in the following form
\begin{equation}\label{originalequations}
\begin{cases}
\sum_{j=1}^d j w_j k_{1j} x_1^{j-1}=\lambda,\\
\sum_{j=1}^d j w_j k_{1j} x_1^{j-1}=\sum_{j=1}^d j w_j k_{ij} x_i^{j-1},\text{ for }i=2,\ldots,n,\\
\sum_{i=1}^n x_i=1.
\end{cases}   
\end{equation}
The number of solutions of the above system equals the number of solutions of the system
\begin{equation}\label{rewrittencriticalequ}
 \begin{cases}
\sum_{j=1}^d j w_j k_{1j} x_1^{j-1}=\sum_{j=1}^d j w_j k_{ij} x_i^{j-1},\text{ for }i=2,\ldots,n,\\
\sum_{i=1}^n x_i=1.
\end{cases}   
\end{equation}
This is true because once we have a value for $x_1$, then $\lambda$ it is uniquely defined by it in \ref{originalequations}.
Now, by Bézout's theorem, we get that the number of complex solutions to system \ref{rewrittencriticalequ} is at most the product of the degrees of these equations, which is exactly $(d-1)\cdot (d-1)\cdot\ldots\cdot (d-1)\cdot 1=(d-1)^{n-1}$. We need the assumption that $w_d\neq 0$ to ensure that the polynomials are actually of degree $d-1$ and not less.
Now we see that for the specific instance of $w_1=w_2=\ldots=w_{d-1}=0$ , and $w_d\neq 0$ we get
\[
\begin{cases}
    dw_dk_{1d}x_1^{d-1}=dw_dk_{id}x_i^{d-1},\text{ for }i=2,\ldots,n,\\
\sum_{i=1}^n x_i=1.
\end{cases}
\]
That is
\[
\begin{cases}
    k_{1d}x_1^{d-1}=k_{id}x_i^{d-1},\text{ for }i=2,\ldots,n,\\
\sum_{i=1}^n x_i=1.
\end{cases}
\]
From the first $(n-1)$ equations, we get that
\[x_i=\zeta_i\sqrt[d-1]{\frac{k_{1d}}{k_{id}}} x_1,\]
for some $\zeta_i$ a $(d-1)$-th root of unity, for all $i=2,n$. Now, for all the $(d-1)^{n-1}$ choices of the $\zeta_i$'s we plug in the expressions of the $x_i$'s to the last equation, and we get
\[x_1\left(1+\sum_{i=2}^n \zeta_i\sqrt[d-1]{\frac{k_{1d}}{k_{id}}}\right)=1,\] which has a unique solution in $x_1$. So this particular system has indeed $(d-1)^{n-1}$ many complex solutions. 

The generic degree is the maximal possible degree among all choices of $w_i$. Picking particular weights can only decrease the degree of the system. Now we have seen that the maximal degree is greater than or equal to the degree of this specific system with $w_1=\ldots=w_{d-1}=0$ and $w_d\neq 0$, which was $(n-1)^{d-1}$, but also by B\'ezout's theorem states that the generic degree is less than or equal to $(d-1)^{n-1}$, so we conclude that under the circumstances of the statement of the theorem, the degree of this system is exactly $(d-1)^{n-1}$.
\end{proof}
Now, if we don't condition on that $w_d\neq 0$, then we get the following result.
\begin{proposition}
Under the assumption that $k_{ij}$ are real and non-zero the generalized utility function $L(x_1,x_2,\ldots,x_n)$ 
has 
\[
(d-1)^{n-1}+(d-2)^{n-1}+\ldots+1^{n-1}
\]
complex non-degenerate critical points with respect to the constraint $\sum_{i=1}^n x_i=1$. Here, non-degenerate means that no $x_i$ equals zero.
\end{proposition}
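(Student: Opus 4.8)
The plan is to regard the critical points of the single degree-$d$ function $L$ as the points of one projective scheme and to obtain the stated count as the total degree of that scheme, split into its affine part and its part at infinity. Concretely, I would homogenize the Lagrange system \eqref{criticalequ} in the variables $(x_1,\ldots,x_n,\lambda)$ by setting $x_i=X_i/X_0$ and $\lambda=\Lambda/X_0$, producing $n$ forms $\widehat{Q}_i$ of degree $d-1$ together with the linear form $\widehat{g}=\sum_{i}X_i-X_0$. The finite non-degenerate critical points are precisely the torus points of this scheme, and by Theorem~\ref{numberofcriticalpoints} there are exactly $(d-1)^{n-1}$ of them; these constitute the contribution of the top-order coefficient $w_d$. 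Since this accounts for only the last summand $(d-1)^{n-1}$, the entire argument reduces to locating the remaining intersection points, all of which must lie on the hyperplane at infinity $\{X_0=0\}$.

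The decisive point — and what keeps this a statement about the one function $L$ rather than about a family of utilities — is that the scheme structure along $\{X_0=0\}$ is controlled by the \emph{same} coefficients of $L$. Restricting $\widehat{Q}_i$ to $X_0=0$ leaves only the leading terms $d\,w_d k_{id}X_i^{d-1}$, so the boundary is a single highly non-reduced point whose resolution is driven by the successive lower-order coefficients $w_{d-1},w_{d-2},\ldots$ of $L$. I would make this explicit through a weighted (toric) degeneration $X_i\mapsto t^{a}X_i,\ X_0\mapsto t^{b}X_0$ chosen so that the boundary point breaks into a chain of strata indexed by an ``effective order'' $m$; the stratum of order $m$ should carry exactly the critical system of the order-$m$ truncation of $L$ restricted to $\sum_i X_i=0$, which by the monomial computation in the proof of Theorem~\ref{numberofcriticalpoints} is a reduced set of $(m-1)^{n-1}$ points for $m=2,\ldots,d-1$. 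Crucially, these lower truncations are not separate optimization problems but the genuine leading behavior of the one $L$ along successive boundary divisors.

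Adding the affine contribution $(d-1)^{n-1}$ (order $m=d$) to the boundary contributions $(d-2)^{n-1}+\cdots+1^{n-1}$ (orders $m=d-1,\ldots,2$) yields the claimed total $(d-1)^{n-1}+(d-2)^{n-1}+\cdots+1^{n-1}$; the hypothesis that the $k_{ij}$ are real and non-zero is what I would use to guarantee that each truncated leading form is itself non-degenerate, so that every stratum is reduced with the stated cardinality and distinct strata do not collide. I expect the main obstacle to be precisely this analysis at infinity: exhibiting a single degeneration that separates the fat boundary point into exactly these strata, and verifying that the order-$m$ truncation inherits enough genericity to force the reduced count $(m-1)^{n-1}$ with no embedded or excess components. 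Establishing that the boundary multiplicity distributes into reduced points in this prescribed pattern, rather than the elementary affine count, is where the real work lies.
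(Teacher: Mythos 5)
Your proposal misreads what the proposition is counting, and the plan fails on numerical grounds before any of the degeneration machinery could be set up. The paper's proposition is a statement about the stratification of the \emph{weight space}: one cases on whether $w_d=0$ or $w_d\neq 0$. On the stratum $w_d\neq 0$, Theorem~\ref{numberofcriticalpoints} gives $(d-1)^{n-1}$ critical points; on the stratum $w_d=0$ the utility genuinely truncates to a degree-$(d-1)$ utility function, to which the same argument applies recursively, and summing over the strata yields $(d-1)^{n-1}+(d-2)^{n-1}+\cdots+1^{n-1}$. All of these are ordinary affine critical points, for the \emph{same} constraint $\sum_i x_i=1$, of the truncated utilities; nothing lives at infinity. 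Your reading — a single fixed $L$ with the missing summands hiding on the hyperplane at infinity — is contradicted by the paper's own analysis in the subsection on critical points at infinity: when $w_d\neq 0$ and $k_{id}\neq 0$, setting $y=0$ forces all $x_i=0$, which is not a projective point, so the affine count $(d-1)^{n-1}$ already exhausts the B\'ezout number of the eliminated system and there is no residual intersection degree left to distribute among boundary strata.

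The arithmetic confirms this. After eliminating $\lambda$ as in system~\ref{rewrittencriticalequ}, the total B\'ezout degree is $(d-1)^{n-1}$, which is \emph{strictly smaller} than the claimed sum for $d\geq 3$; no decomposition of a single intersection scheme into affine and boundary contributions can have total degree exceeding its B\'ezout number. If instead you keep $\lambda$ and homogenize all $n+1$ equations as you propose, the total degree becomes $(d-1)^n$, and the unique fat point at infinity $(0:\cdots:0:\Lambda)$ carries multiplicity $(d-1)^n-(d-1)^{n-1}=(d-1)^{n-1}(d-2)$, which does not equal $(d-2)^{n-1}+\cdots+1^{n-1}$ in general (for $n=2$, $d=4$: $6$ versus $3$). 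There is also a structural obstruction to your stratification: restricting the homogenized equations to $\{X_0=0\}$ kills every term involving $w_{d-1},\ldots,w_1$, so the reduced boundary locus is independent of the lower-order coefficients, and your putative order-$m$ strata would in any case sit over the homogeneous constraint $\sum_i X_i=0$ rather than the affine constraint $\sum_i x_i=1$ where the proposition's critical points live. The correct (and far more elementary) argument is the paper's recursion on the top nonzero weight.
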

\begin{proof}
We will work recursively. There are two cases. Critical points arising from $w_d=0$ and from $w_d\neq 0$. For the latter case, by Theorem~\ref{numberofcriticalpoints} we have $(d-1)^{n-1}$ non-degenerate critical points. In the former case, we just restricted to a degree $(d-1)$ utility function, where we repeat our arguments from Theorem~\ref{numberofcriticalpoints} recursively, so we again split into two cases of $w_{d-1}=0$ and $w_{d-1}\neq 0$. So finally we get a  total of
\[
(d-1)^{n-1}+(d-2)^{n-1}+\ldots+1^{n-1}.
\]
critical points
\end{proof}

\begin{remark} We make the following observations.
\begin{itemize}
\item In the case of $d=2$, we get back the well-known fact that when only considering the first two cumulants (the expected value and the variance), we get one critical point of the system.
\item In the case of $n=2$, we will always get $d-1$ complex solutions (roots of a degree $(d-1)$ polynomial).
\item If $d$ is even, then we always have at least one real solution, because $(d-1)^{n-1}$ is odd, and complex solutions come in conjugate pairs.
\end{itemize}
\end{remark}

Observe that if $d$ is even and the leading term of $L$, that is $w_d\sum_{i=1}^n k_{id} x_i^{d}$, is negative definite, then we get a global maximum. Otherwise, there is no global maximum.
\begin{proposition}
The generalized utility function $L$ has a global maximum, if and only if $d$ is even and all the $d^{th}$ order cumulants are of the same sign and $w_d$ has the opposite sign.
\end{proposition}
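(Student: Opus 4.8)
The plan is to study $L$ restricted to the affine constraint hyperplane $A=\{x\in\RR^n:\sum_{i=1}^n x_i=1\}$, identified with $\RR^{n-1}$, and to decide whether $L|_A$ attains a global maximum by analyzing its behavior at infinity. Writing $c_i:=w_d k_{id}$ and $H(x):=\sum_{i=1}^n c_i x_i^d$ for the degree-$d$ homogeneous part of $L$, the key observation is that along any ray $x(t)=p+tv$ with $p\in A$ and $0\neq v\in V:=\{v:\sum_i v_i=0\}$ one has $L(x(t))=t^d H(v)+O(t^{d-1})$ as $t\to\infty$. Hence the asymptotics of $L|_A$, and with them the existence of a global maximum, are governed entirely by the sign of the restriction $H|_V$. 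I would then invoke the standard dichotomy: if $H(v)<0$ for all $v\in V\setminus\{0\}$ then $L|_A$ is coercive (tends to $-\infty$ in every direction) and, being continuous, attains a global maximum; whereas if $H(v)>0$ for some $v\in V$ then $L|_A$ is unbounded above and no global maximum exists.

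For the \emph{if} direction, assume $d$ is even, all the cumulants $k_{id}$ share a common sign, and $w_d$ has the opposite sign, so that $c_i<0$ for every $i$. Since $d$ is even we have $v_i^d\geq 0$, whence $H(v)=\sum_i c_i v_i^d\leq 0$ with equality only if $v=0$; restricting to the unit sphere of $V$ and using compactness together with homogeneity gives $H(v)\leq-\delta\lVert v\rVert^d$ for some $\delta>0$. Substituting this into the expansion above shows $L(x(t))\to-\infty$ along every ray, so $L|_A$ is coercive and the global maximum exists. This makes precise the observation of the preceding remark that negative definiteness of the leading form produces a global maximum.

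For the \emph{only if} direction I would argue contrapositively. If $d$ is odd, I first check that $H|_V\not\equiv 0$ by exhibiting an explicit test direction: $v=e_1-e_2$ gives $H(v)=c_1-c_2$, and if all the $c_i$ coincide then $v=2e_1-e_2-e_3$ gives $H(v)=(2^d-2)c_1\neq 0$. Since $d$ is odd, $H(-v)=-H(v)$, so $H$ is strictly positive in one of the two opposite directions and $L|_A$ is unbounded above, ruling out a maximum; this forces $d$ even. It then remains to show that if the $c_i$ do not all share a single negative sign, then $H|_V$ takes a strictly positive value, again precluding a maximum.

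\textbf{This last implication is the main obstacle.} The naive test vectors $e_i-e_j$ only yield $H(e_i-e_j)=c_i+c_j$, so they certify at most that $c_i+c_j<0$ for every pair, which is strictly weaker than $c_i<0$ for every $i$; in principle a diagonal even form can be negative on $V\setminus\{0\}$ while one coefficient is positive, so one cannot conclude the coefficientwise sign condition from pairwise tests alone. Closing this gap is where the careful proof must concentrate: one must exploit the full family of admissible directions in $V$ (rather than a finite set of them), and I expect the genuine positivity carried by real cumulant data to be what ultimately rigidifies the leading form. A secondary subtlety to address is the borderline case in which $H|_V$ is negative \emph{semi}definite with nontrivial kernel, where the existence of a maximum is decided by the lower-order terms $O(t^{d-1})$ rather than by $H$ alone.
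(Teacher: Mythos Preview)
The paper gives no proof of this proposition; it is stated immediately after the one-line observation that a negative-definite leading form $w_d\sum_i k_{id}x_i^d$ forces a global maximum and that ``otherwise, there is no global maximum.'' Your coercivity framework and your handling of the \emph{if} direction are correct and make that observation precise.

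Your instinct about the \emph{only if} direction is also right, and the obstacle you flag is fatal rather than merely technical: the proposition as stated is false. The existence of a global maximum is governed by $H|_V$, not by $H$ on all of $\RR^n$, and these genuinely differ. Take $n=2$, $d=4$, $c_1=w_4k_{14}=1$, $c_2=w_4k_{24}=-2$. On $V=\RR\,(1,-1)$ one has $H(t,-t)=(c_1+c_2)t^4=-t^4<0$, so $L|_A$ is coercive and attains a global maximum, yet $k_{14}$ and $k_{24}$ have opposite signs. The same phenomenon persists for $n\ge 3$: with $d=4$, $c_1=1$, $c_2=c_3=-9$ one checks that $b^4+(a+b)^4\ge a^4/8$, whence $H(a,b,-a-b)=a^4-9\bigl(b^4+(a+b)^4\bigr)\le -a^4/8<0$ on $V\setminus\{0\}$. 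So the gap cannot be closed by appealing to ``real cumulant data'' or anything else; the coefficientwise sign condition is sufficient but not necessary, and the honest equivalence is with negative definiteness of $H|_V$ (supplemented by the lower-order analysis in the semidefinite case you also noted). Your secondary worry about odd $d$ with $H|_V\equiv 0$---which already occurs for $n=2$ when $c_1=c_2$, a case your test vector $2e_1-e_2-e_3$ does not cover---is likewise a genuine failure of the stated equivalence, since the next term (of even degree $d-1$) can then produce coercivity and a maximum despite $d$ being odd.
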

\begin{example}
Let us now consider a portfolio consisting of two assets, and assume that the utility function takes into account the first four cumulants of the returns. In this case, our statistical data matrix $k=(k_{ij})$ is a $2\times 4$ matrix of real non-zero numbers. In this case, the system of critical equations looks like
\[
 \begin{cases}
\sum_{j=1}^4 j w_j k_{1j} x_1^{j-1}=\lambda,\\
\sum_{j=1}^4 j w_j k_{2j} x_2^{j-1}=\lambda,\\
x_1+x_2=1.
\end{cases}   
\]
One can find out, by for example using Macaulay $2$ (see \cite{M2}), that before fixing the $k_{ij}$ and the $w_j$, this system gives rise to a degree $15$, codimension $3$ variety in the $8+4+2+1=15$ dimensional space of the $k_{ij}$'s, $w_j$'s, $x_i$'s and $\lambda$. For fixed $w_j$, we get a degree $7$, codimension $7$ variety. For fixed $k_{ij}$, this system gives rise to a degree $6$, codimension $11$ variety. Finally, when fixing both the $k_{ij}$'s and the $w_j$'s, we get a degree $3=(4-1)^{(2-1)}$, zero-dimensional variety. Moreover, the leading form of the utility function $L$ is equal to $w_4(k_{14}x_1^4+k_{24}x_2^4)$. If this is negative definite, then we will have a global maximum of $L$.
\end{example}
\subsection{Critical points at infinity}
We have seen in the previous section that under mild assumptions on $k_{ij}$-s, the utility function $L$ will always have $(d-1)^{n-1}$ complex critical points given that $w_d\neq 0$. In this section, we study those $(k_{ij})$-s for which we will have fewer than $(d-1)^{n-1}$ critical points even if we assume that $w_d\neq 0$. Actually, we do not “lose” critical points; they merge or escape to infinity, so the number counted with multiplicity remains constant.

The Bézout bound we used in the proof of Theorem~\ref{numberofcriticalpoints} originally holds for a system of homogeneous polynomials and acts as an upper bound for the non-homogeneous case. This is because of the following. If given a non-homogeneous system, we homogenize it using an extra variable $y$, and we get a homogeneous system with as many solutions as the product of the degrees. Now, each solution of the homogenized system is also a solution of the original non-homogeneous system by scaling it to get $y=1$, unless the homogeneous solution is of the form such that $y=0$. We call such a solution \textit{a solution at infinity}.

Now, if we analyze when our system \ref{criticalequ} of critical equations loses solutions at infinity. By homogenizing the system \ref{criticalequ} using the variable $y$ we get
\[
 \begin{cases}
\sum_{j=1}^d j w_j k_{ij} x_i^{j-1}y^{d-j}=\lambda y^{d-1},\text{ for }i=1,\ldots,n,\\
\sum_{i=1}^n x_i=y.
\end{cases} 
\]
Now, setting $y=0$, we get
\[
\begin{cases}
w_d k_{id} x_i^{d-1}=0,\text{ for }i=1,\ldots,n,\\
\sum_{i=1}^n x_i=0.
\end{cases} 
\]
Now, since $w_d\neq 0$ and $k_{id}\neq 0$, we get that the only solution at infinity is when all $x_i$ are zero.
\section{The discriminant locus}
We have established so far that under the assumptions we made, the number of complex critical points of $L$ is constant, counting multiplicities, and equals $(d-1)^{n-1}$. Moreover, critical points never get lost to infinity. 
In this section, we try to understand the locus of those $w_i$-s, for which at least one of the critical points of $L$ gains multiplicity. This locus is the \textit{discriminant locus} of $L$.

Let us denote by $\mathcal{C}_{w,k}\subseteq \mathbb{C}^{n+1}_{\lambda,x_1\ldots,x_n}$ the zero-dimensional variety of critical points defined by the equations in the system \ref{criticalequ}, which was
\begin{equation}
\begin{cases}
P_i(x_i)-\lambda=0,\text{ for }i=1,\ldots,n,\\
\sum_{i=1}^n x_i-1=0,
\end{cases}   
\end{equation}
where $P_i(x_i)=\sum_{j=1}^d j w_j k_{ij} x_i^{j-1}$.
We aim to describe conditions on $\mathcal{C}_{w,k}$ so that the Jacobian of the system has a rank deficit. That is, condition on the $w_j$'s and the $k_{ij}$, so $\mathcal{C}_{w,k}$ has singular points. First, we differentiate the polynomials of the above system with respect to $\lambda$, then with respect to $x_1, x_2,\ldots,x_n$, and we see that the Jacobian is equal to
\[
\begin{bmatrix}
-1&\frac{\partial P_1}{\partial x_1}&0&\ldots&0\\
-1&0&\frac{\partial P_2}{\partial x_2}&\ldots&0\\
\vdots&\vdots&\vdots&\ddots&\vdots\\
-1&0&0&\ldots&\frac{\partial P_n}{\partial x_n}\\
0&1&1&\ldots&1
\end{bmatrix}.
\]
It has a rank deficit if it has a non-trivial kernel, so we need $v=(v_0,v_2,\ldots,v_{n})\in\mathbb{C}^{n+1}$ non-zero, such that
\[\mathrm{Jac}(\mathcal{C}_{w,k})\cdot v=0.\]
This translates to 
\[
\begin{cases}
\frac{\partial P_i(x_i)}{\partial x_i}v_i=v_0,\text{ for all }i=1,n,\\
\sum_{i=1}^n v_i=0.
\end{cases}
\]
There are three cases to consider. First, if there are at least two partial derivatives that are zero, say $\frac{\partial P_i(x_i)}{\partial x_i}=\frac{\partial P_j(x_j)}{\partial x_j}=0$, then setting $v_0=v_k=0$ $k\neq i,j$ and $v_j=v_i\neq 0$ gives infinitely many non-zero vectors in the kernel. Now, suppose that exactly one of the $\frac{\partial P_i(x_i)}{\partial x_i}$-s is zero, then $v_0$ must be zero, and as a consequence, all the $v_j$-s must be zero, with $j\neq i$. But then the last condition forces $v_i$ it to become zero as well, thus producing the trivial solution. Lastly, suppose that none of the partial derivatives is zero. Then, to get a non-trivial solution, we must have that no $v_i$ is zero (because if one of the $v_i$'s is zero, this forces $v_0$ to be zero, but then all the other $ v_i$'s become zero as well). We get that
\[
\begin{cases}
v_i=\frac{v_0}{\frac{\partial P_i(x_i)}{\partial x_i}},\text{ for all }i=1,n,\\
\sum_{i=1}^n v_i=0.
\end{cases}
\]
And plugging in the first $n$ equations into the last one and dividing by $v_0$, we get
\[
\sum_{i=1}^n \frac{1}{\frac{\partial P_i(x_i)}{\partial x_i}}=0.
\]
Now multiplying this equation by $\prod_{i=1}^n \frac{\partial P_i(x_i)}{\partial x_i}$, we get
\[
\sum_{i=1}^n \prod_{j\neq i}\frac{\partial P_j(x_j)}{\partial x_j}=0.
\]
To summarize this, we have the following theorem.
\begin{theorem}\label{thmdisc}
A critical point, $(x_1,x_2,\ldots,x_n)$, of the utility function $L$ has a multiplicity if and only if it satisfies
\begin{equation}\label{discriminant}
    \sum_{i=1}^n \prod_{j\neq i}\frac{\partial P_j(x_j)}{\partial x_j}=0,
\end{equation}
where $P_i(x_i)=\sum_{j=1}^d j w_j k_{ij} x_i^{j-1}$.
\end{theorem}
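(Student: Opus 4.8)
The plan is to interpret ``has a multiplicity'' scheme-theoretically: the critical locus $\mathcal{C}_{w,k}\subseteq\mathbb{C}^{n+1}_{\lambda,x_1,\ldots,x_n}$ is a zero-dimensional complete intersection cut out by the $n+1$ polynomials $P_i(x_i)-\lambda$ ($i=1,\ldots,n$) together with $\sum_i x_i-1$, so a solution carries multiplicity greater than one exactly when it is a singular point of this scheme. For a complete intersection, singularity is governed by the Jacobian criterion: the point is non-reduced if and only if the Jacobian of the defining system drops rank there. Thus the first step is to reduce the statement ``the critical point has a multiplicity'' to ``the Jacobian of the system has a nontrivial kernel.''

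Next I would write down the Jacobian explicitly. Differentiating $P_i(x_i)-\lambda$ with respect to $\lambda,x_1,\ldots,x_n$ produces a row with $-1$ in the $\lambda$-column and $\partial P_i/\partial x_i$ in the $i$-th slot, while the linear constraint contributes the row $(0,1,1,\ldots,1)$; this recovers the displayed $(n+1)\times(n+1)$ matrix. A rank deficit is equivalent to the existence of a nonzero $v=(v_0,v_1,\ldots,v_n)$ with $(\partial P_i/\partial x_i)\,v_i=v_0$ for every $i$ and $\sum_{i=1}^n v_i=0$. I would then solve this linear system by a case split on how many of the derivatives $\partial P_i/\partial x_i$ vanish. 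If at least two vanish, a kernel vector can be supported on just those two coordinates, giving singularity for free; if exactly one vanishes, the corresponding equation forces $v_0=0$, hence every $v_i=0$, so the Jacobian has full rank and the point is smooth; and if none vanishes, one solves $v_i=v_0/(\partial P_i/\partial x_i)$ and the constraint becomes $\sum_{i=1}^n 1/(\partial P_i/\partial x_i)=0$, which after clearing denominators reads $\sum_{i=1}^n\prod_{j\neq i}\partial P_j/\partial x_j=0$.

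The final and most delicate step is to verify that the single polynomial condition \eqref{discriminant} already encodes all three cases at once, so the case distinction collapses into one clean equation. When exactly one derivative $\partial P_k/\partial x_k$ vanishes, every summand except $i=k$ carries that vanishing factor, so the sum reduces to $\prod_{j\neq k}\partial P_j/\partial x_j\neq 0$, correctly signaling smoothness; when two or more vanish, each summand retains at least one zero factor and the sum vanishes, matching singularity; and when none vanishes, dividing by the nonzero product $\prod_{j}\partial P_j/\partial x_j$ recovers the equivalence with $\sum_{i}1/(\partial P_i/\partial x_i)=0$. I expect this bookkeeping---confirming that the polynomial form faithfully tracks the degenerate boundary cases and not merely the generic one---to be the main point requiring care, since it is precisely what licenses replacing the three-way analysis of the kernel by the single algebraic condition \eqref{discriminant}.
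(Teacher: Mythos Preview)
Your proposal is correct and follows essentially the same route as the paper: compute the $(n+1)\times(n+1)$ Jacobian of the defining system, reduce ``has a multiplicity'' to the existence of a nontrivial kernel vector, and run the same three-way case split on how many of the $\partial P_i/\partial x_i$ vanish to arrive at \eqref{discriminant}. If anything, your final bookkeeping is slightly more thorough than the paper's, since you explicitly verify that in the ``exactly one derivative vanishes'' case the sum in \eqref{discriminant} reduces to a single nonzero product, whereas the paper's proof only records that the two-vanishing case satisfies \eqref{discriminant} automatically.
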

\begin{proof}
The proof of this theorem follows from the discussion preceding it; the only thought to be added is that in the case when two partials are zero, say $\frac{\partial P_i(x_i)}{\partial x_i}=\frac{\partial P_j(x_j)}{\partial x_j}=0$, the equation~\ref{discriminant} is satisfied automatically.
\end{proof}
Now, to get the locus of those $k_{ij}$ and $w_j$ for which $L$ has a critical point with a multiplicity, namely the discriminant locus of $L$, one has to eliminate the $x_i$ from the equation \ref{discriminant}. Let us see this in the following example.
\begin{example}
Let us consider the portfolio consisting of two assets, and we want to optimize an order four utility function $L$. The system of equations defining the critical points of $L$ is the following,
\[
 \begin{cases}
P_1(x_1)=\lambda,\\
P_2(x_2)=\lambda,\\
x_1+x_2=1,
\end{cases}   
\] where $P_i(x_i)=\sum_{j=1}^4 j w_j k_{ij} x_i^{j-1}$. Now, based on Theorem\ref{thmdisc} in order to have critical points with multiplicity, we need in addition that
\[
\frac{\partial P_1}{\partial x_1}+\frac{\partial P_2}{\partial x_2}=0.
\]
Adding this to the system and eliminating first $\lambda$, then $x_1$, and $x_2$ we get the following irreducible polynomial in variables $k_{ij}$ and $w_j$, of degree $8$, with $128$ terms
\[
9k_{12}^2k_{13}^2w_2^2w_3^2+18k_{12}k_{13}^2k_{22}w_2^2w_3^2+\ldots-2592k_{13}k_{14}k_{24}^2w_3w_4^3-1728k_{14}^2k_{24}^2w_4^4,
\] defining the discriminant locus of $L$.
\end{example}
\section{The feasible portfolio variety}
So far, we have approached portfolio optimization by optimizing the utility function $L=\sum_{j=1}^d w_j \kappa_j$, which incorporates the cumulants of the portfolio $P=\sum_{i=1}^n x_i X_i$, with respect to the constraint that $\sum_{i=1}^nx_i=1$. A different way is to first consider the so-called feasible portfolio set of points with coordinates equal to the possible cumulants of the portfolio depending on the choice of the $x_i$'s, and optimize the linear function $\sum_{j=1}^d w_j y_j$ on it.

Remember that the $j^{th}$ cumulant of our portfolio was
\[\kappa_j(P)=\sum_{i=1}^n x_i^j k_{ij},\] where we denoted $k_{ij}=\kappa_{j}(X_i)$. Then we have the following definition.
\begin{definition}
We call the \textbf{feasible portfolio variety} $\mathcal{P}_{k}$ the closure of the image of the hyperplane $\sum_{i=1}^n x_1=1$ under the map
\[
\begin{cases}
\phi:\mathbb{C}^n\to \mathbb{C}^d,\\
(x_1,x_2\ldots,x_n)\mapsto (\sum_{i=1}^n x_i^jk_{ij})_{j=1,d}.
\end{cases}
\] 
In order to incorporate that $\mathcal{P}_{k}$ is the image of the hypersurface $\sum_{i=1}^n x_i=1$ under $\phi$, we instead look at $\mathcal{P}_k$ as the closure of the image of the map
\[
\begin{cases}
\phi':\mathbb{C}^{n-1} \to \mathbb{C}^d,\\
(x_1,x_2,\ldots,x_{n-1})\mapsto (\sum_{i=1}^{n-1} x_i^j k_{ij}+(1-\sum_{i=1}^{n-1} x_i)^j k_{nj})_{j=1,d}.
\end{cases}
\]
\end{definition}
\begin{example}
A plot of the feasible portfolio curve (dimension $n-1=2-1=1$) in the case of two assets ($n=2$) of order three ($d=3$), for the following parametrization of $\mathcal{P}_k$
\[
x_1\mapsto (2 x_1+5\left(1-x_1\right),x_1^{2}+2\left(1-x_1\right)^{2},7x_1^{3}+\left(1-x_1\right)^{3}),
\]
is displayed on the left of Figure~\ref{portfoliofig}. In this case, $\mathcal{P}_k$ is of degree $3=d\cdot \ldots\cdot (d-n+2)$, and is generated by the vanishing of the polynomials
\[
\begin{cases}
 36y_2^2+18y_1y_3+367y_1-561y_2+81y_3-1028=0,\\
6y_1y_2+23y_1-63y_2+9y_3-58=0,\\
y_1^2-6y_1-3y_2+11=0.   
\end{cases}
\]

A plot of the feasible portfolio surface (dimension $n-1=3-2=2$) in the case of three assets ($n=3$) of order three ($d=3$), for the following parametrization of $\mathcal{P}_k$
\[
(x_1,x_2)\mapsto (2x_1-3x_2+5(1-x_1-x_2), x_1^2+5x_2^2-7(1-x_1-x_2)^2, -3x_1^3+x_2^3+(1-x_1-x_2)^3),\]
is displayed on the right of Figure~\ref{portfoliofig}. In this case, $\mathcal{P}_k$ is of degree $6=3\cdot 2=d\cdot \ldots\cdot (d-n+2)$ and it is generated by the vanishing of the polynomial
\[
1468y_1^6-5280y_1^5+46512y_1^4y_2+229584y_1^4+\ldots+6234864y_2+1303344y_3-2752336.
\]
\begin{figure}
    \centering
   \includegraphics[width=0.4\textwidth]{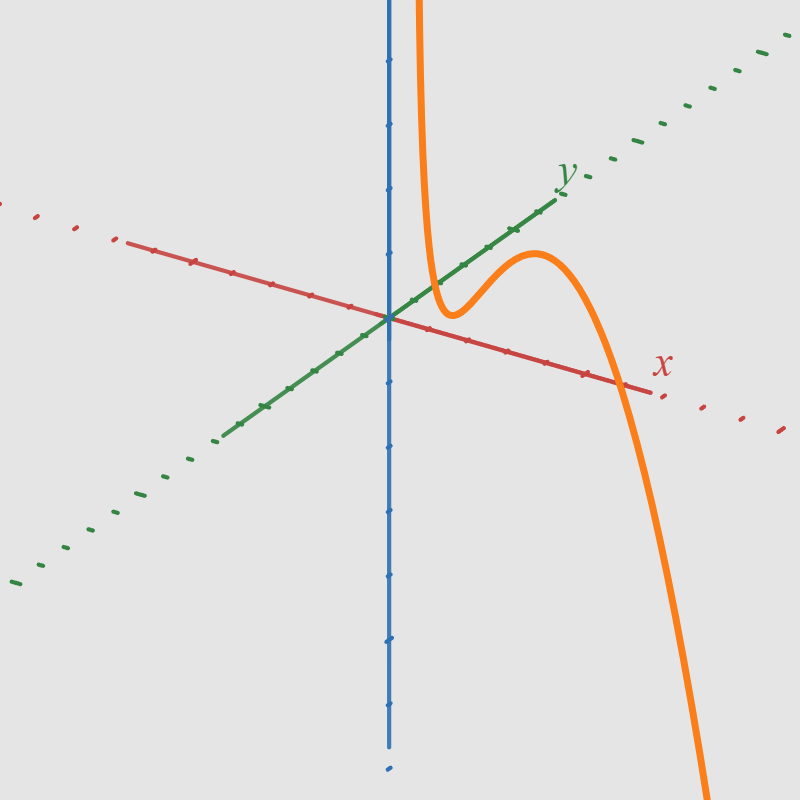} \hspace{1em}\includegraphics[width=0.55\textwidth]{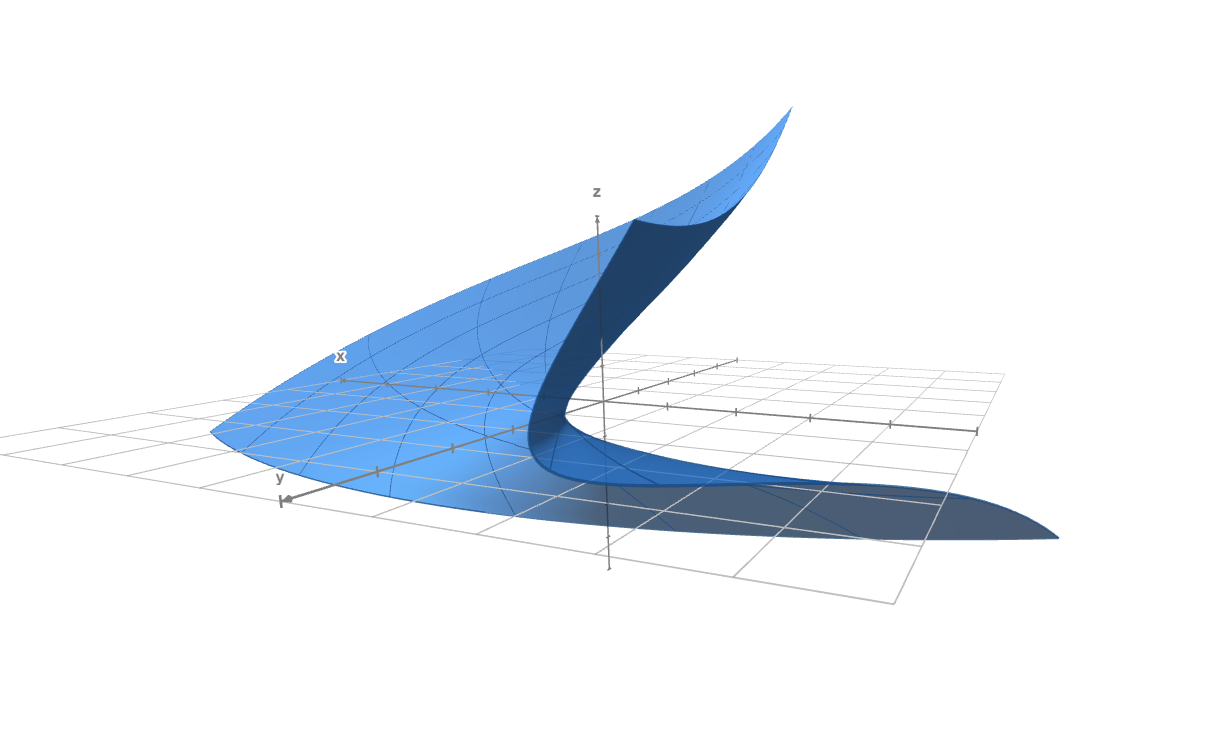}
    \caption{On the left, a two-asset order three feasible portfolio curve. On the right, a three-asset order three feasible portfolio surface.}
    \label{portfoliofig}
\end{figure}
We used Macaulay $2$ (see \cite{M2}) for the computations and Desmos (\url{https://www.desmos.com}) for the pictures.
\end{example}
We want to study the basic geometric properties (dimension and degree) of $\mathcal{P}_k$, for generic parameters $k_{ij}$.
\subsection{The dimension of $\mathcal{P}_k$}
We suppose that $d\geq n-1$. We want to show that for generic choices of $k_{ij}$, the image of the $n-1$ dimensional hypersurface defined by $\sum_{i=1}^nx_i=1$ will be $n-1$ dimensional inside $\mathbb{C}^d$. 
For this, we pick a generic point on $\mathcal{P}_k$ parametrized by a generic choice of $(x_1, x_2,\ldots, x_n)$ and we compute the Jacobian of the map at this point and show that it is of full rank.

For this, we compute the derivatives and we get that for $i=1,\ldots,n-1$
\[
\frac{\partial \phi'}{\partial x_i}=\left(jx_i^{j-1}k_{ij}-j\left(1-\sum_{i=1}^{n-1}x_i\right)^{j-1}k_{nj}\right)_{j=1,d}.
\]
If we denote $1-\sum_{i=1}^{n-1}x_i$ by $x_n$, then we get that
\[
\frac{\partial \phi'}{\partial x_i}=j\left(x_i^{j-1}k_{ij}-x_n^{j-1}k_{nj}\right)_{j=1,d}.
\]
Now, to see that $\mathrm{Jac}(\phi')$ is generically of full rank, we need one specific instance of the $k_{ij}$'s and the $x_i$'s so that we get a full rank Jacobian (this is to see that the variety of $k$'s and $x$'s for which all the maximal minors vanish is a proper codimension at least one variety).

First, choose $x_1,x_2,\ldots,x_{n-1}$, such that $x_n=1-\sum_{i=1}^{n-1}x_i=0$ and $k_{n1}=0$ (this choice violates the non-zero assumption but is used only to show the existence of one full-rank instance), so get the row vectors
\[
\frac{\partial \phi'}{\partial x_i}=j\left(x_i^{j-1}k_{ij}\right)_{j=1,d}, \text{ for }i=1,n-1.
\]
These row vectors are independent if and only if the following row vectors are independent
\[\left(x_i^{j-1}k_{ij}\right)_{j=1,d}, \text{ for }i=1,n-1.\]
Now, when choosing all $k_{ij}=1$, we get a $(n-1)\times d$ Vandermonde matrix, which is of full rank if all the $x_i$'s are distinct. This can be easily realized. To summarize this we get the following statement.
\begin{proposition}\label{dimofPk}
For generic values of the cumulants $k=(k_{ij})_{ij}$, the feasible portfolio variety $\mathcal{P}_k$ is $(n-1)$ dimensional.
\end{proposition}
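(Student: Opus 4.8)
The plan is to identify $\dim \mathcal{P}_k$ with the generic rank of the Jacobian of the parametrization $\phi'$ and then to pin down that rank by a single well-chosen specialization. Since $\mathcal{P}_k$ is by definition the Zariski closure of the image of the polynomial map $\phi'\colon \mathbb{C}^{n-1}\to\mathbb{C}^d$, its dimension equals the maximal rank attained by $\mathrm{Jac}(\phi')$ as $(x_1,\ldots,x_{n-1})$ ranges over the source. Because the source is $(n-1)$-dimensional, this rank is at most $n-1$, giving the upper bound $\dim\mathcal{P}_k\le n-1$ for free. The entire content of the statement is therefore the reverse inequality: for generic $k$ there is a point at which $\mathrm{Jac}(\phi')$ has full rank $n-1$.

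To obtain the reverse inequality I would argue by semicontinuity. The $(n-1)\times(n-1)$ minors of $\mathrm{Jac}(\phi')$ are polynomials in the joint variables $(k_{ij},x_i)$, so the locus $V$ in $(k,x)$-space on which every such minor vanishes is Zariski closed. If I can exhibit a single point $(k^\star,x^\star)$ at which some maximal minor is nonzero, then $V$ is a proper closed subset of the irreducible ambient space, hence its complement $U$ is dense and open. Projecting $U$ to the $k$-coordinates, dominance of the projection $\pi$ together with Chevalley's theorem shows that $\pi(U)$ is a dense constructible set, and so contains a dense open subset of the $k$-space; for every $k$ in this subset the fiber meets $U$, i.e.\ there exists an $x$ with $\mathrm{rank}\,\mathrm{Jac}(\phi')(x)=n-1$, and hence $\dim\mathcal{P}_k=n-1$. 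Thus the whole problem reduces to producing one full-rank witness.

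For the witness I would use the Jacobian rows computed above,
\[
\frac{\partial \phi'}{\partial x_i}=j\left(x_i^{j-1}k_{ij}-x_n^{j-1}k_{nj}\right)_{j=1,\ldots,d},\qquad i=1,\ldots,n-1,
\]
and specialize so that the subtracted terms disappear: choosing $x_1,\ldots,x_{n-1}$ with $x_n=1-\sum x_i=0$ and setting $k_{n1}=0$ collapses the $i$-th row to $j(x_i^{j-1}k_{ij})_{j}$. The nonzero diagonal factors $j=1,\ldots,d$ can be scaled out without affecting rank, so independence of these rows is equivalent to independence of $(x_i^{j-1}k_{ij})_j$. Taking all remaining $k_{ij}=1$ turns this into the $(n-1)\times d$ Vandermonde matrix $(x_i^{j-1})_{i,j}$, which (using $d\ge n-1$) has rank $n-1$ as soon as the $x_i$ are pairwise distinct, an easily arranged condition. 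This exhibits the required witness.

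The step I expect to be the main obstacle is not the Vandermonde computation but the logical transfer from the single witness to generic $k$. The witness is deliberately degenerate: it uses $x_n=0$ and $k_{n1}=0$, which violate the standing non-degeneracy and non-vanishing hypotheses. One must therefore be careful that the witness is invoked only to certify that the maximal minors do not vanish identically, so that the generic rank over the honest parameter range is still $n-1$; the semicontinuity and Chevalley argument above is exactly what licenses this passage, and it is where the genericity in the statement genuinely enters.
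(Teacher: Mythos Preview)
Your proposal is correct and follows essentially the same approach as the paper: compute the Jacobian of $\phi'$, specialize to $x_n=0$, $k_{n1}=0$, and all other $k_{ij}=1$ to reduce to a Vandermonde matrix, and conclude full rank for generic parameters. You are more explicit than the paper about the semicontinuity/Chevalley passage from the degenerate witness to generic $k$, but the underlying argument is identical.
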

\subsection{The degree of $\mathcal{P}_k$}
To get the degree of $\mathcal{P}_k$ we will have to intersect it with a generic complementary dimensional (by Proposition\ref{dimofPk} that is equal to $d-(n-1)$) hypersurface and count the number of intersection points.

Let $\mathcal{H}\subseteq \mathbb{C}^d$ be a $d-(n-1)$ dimensional generic hypersurface defined by the vanishing of
\[
\begin{cases}
\sum_{j=1}^d c_{1,j}y_j=c_{1,0},\\
\sum_{j=1}^d c_{2,j}y_j=c_{2,0},\\
\vdots\\
\sum_{j=1}^d c_{n-1,j}y_j=c_{n-1,0}.
\end{cases}
\]
Recall that $\mathcal{P}_k$ is the closure of the image of the hyperplane, defined by $\sum_{i=1}^n x_i=1$, under the map
\[
\begin{cases}
\phi:\mathbb{C}^{n} \to \mathbb{C}^d,\\
(x_1,x_2,\ldots,x_{n})\mapsto (\sum_{i=1}^{n}  k_{ij}x_i^j)_{j=1,d}.
\end{cases}
\]
Now, $\mathcal{H} \cap \mathcal{P}_k$ is given by 

\begin{equation}\label{degree_equ}
\begin{cases}
\sum_{j=1}^d c_{1,j}\left(\sum_{i=1}^{n} k_{ij}x_i^j\right)=c_{1,0},\\
\sum_{j=1}^d c_{2,j}\left(\sum_{i=1}^{n-1} k_{ij}x_i^j \right)=c_{2,0},\\
\vdots\\
\sum_{j=1}^d c_{n-1,j}\left(\sum_{i=1}^{n-1} k_{ij}x_i^j\right)=c_{n-1,0},\\
\sum_{i=1}^n x_i=1.
\end{cases}
\end{equation}

After homogenizing using $y$ we get 
\[
\begin{cases}
\sum_{j=1}^d c_{1,j}\left(\sum_{i=1}^{n} k_{ij}x_i^jy^{d-j}\right)=c_{1,0}y^d,\\
\sum_{j=1}^d c_{2,j}\left(\sum_{i=1}^{n-1} k_{ij}x_i^jy^{d-j} \right)=c_{2,0}y^d,\\
\vdots\\
\sum_{j=1}^d c_{n-1,j}\left(\sum_{i=1}^{n-1} k_{ij}x_i^jy^{d-j}\right)=c_{n-1,0}y^d,\\
\sum_{i=1}^n x_i=y.
\end{cases}
\]
At infinity, this system has a positive-dimensional solution set, namely the solution set of
\[
\begin{cases}
\sum_{i=1}^{n} k_{ij}x_i^j=0,\\
\sum_{i=1}^n x_i=0.
\end{cases}
\]
Therefore, we must rewrite system \ref{degree_equ} in an equivalent form to prevent this from happening.
For $l=1,\ldots,n-1$ let us denote
\[
P_l(x_1,\ldots,x_{n})=c_{l,d}\sum_{i=1}^nk_{i,d}x_i^{d}+c_{l,d-1}\sum_{i=1}^nk_{i,d-1}x_i^{d-1}+\ldots+c_{l,1}\sum_{i=1}^nk_{i,1}x_i-c_{l,0}.
\]
We can observe now that all degree $k$ terms of the $P_l$'s are scalar multiples of each other, for all $k=0,\ldots,d$. By multiplying with non-zero scalars and adding the equations of the system~\ref{degree_equ}, we do not change the solution set of it, so we get that system~\ref{degree_equ} is equivalent to
\[
\begin{cases}
P_1=0,\\ 
Q_2:=c_{1d}P_2-c_{2,d}P_1=(c_{1,d}c_{2,d-1}-c_{2,d}c_{1,d-1})\sum_{i=1}^nk_{i,d-1}x_i^{d-1}+l.o.t=0,\\
\vdots\\
Q_{n-1}:=c_{1,d}P_{n-1}-c_{n-1,d}P_1=(c_{1,d}c_{n-1,d-1}-c_{n-1,d}c_{1,d-1})\sum_{i=1}^nk_{i,d-1}x_i^{d-1}+l.o.t=0,\\
\sum_{i=1}^n x_i=1.
\end{cases}
\]
Observe that here again all degree $k$ terms of the $Q_l$'s are scalar multiples of each other, for all $k=0,\ldots,d-1$, so we repeat the same process.

By repeating this, finally, we get that system~\ref {degree_equ} is equivalent to 
\[
\begin{cases}
m_{1,d}(\sum_{i=1}^nk_{i,d}x_i^{d})+m_{1,d-1}(\sum_{i=1}^nk_{i,d-1}x_i^{d-1})+\text{ lower order terms}=0,\\
m_{2,d-1}(\sum_{i=1}^nk_{i,d-1}x_i^{d-1})+m_{2,d-2}(\sum_{i=1}^nk_{i,d-2}x_i^{d-2})+\text{ lower order terms}=0,\\
\vdots\\
m_{n-1,d-n+2}(\sum_{i=1}^nk_{i,d-n+2}x_i^{d-n+2})+m_{n-1,d-n+1}(\sum_{i=1}^nk_{i,d-n+1}x_i^{d-n+1})+\text{ lower order terms}=0,\\

\sum_{i=1}^n x_i=1,
\end{cases}
\]
with some corresponding generic coefficients $m_{l,k}\in \mathbb{R}$. 

Now if we denote $v_{l}:=\sum_{i=1}^nk_{i,l}x_i^{l}$, for $l=d,d-1,\ldots,d-n+2$, then the first $n-1$ equations of the above system form a generic linear system, with a unique solution, in the unknowns $v_l$. So, in order to find the $x_i$ solutions to the above system, first we need to find the solutions to
\[
\begin{cases}
\sum_{i=1}^n k_{i,d}x_i^d=v_d,\\ 
\sum_{i=1}^n k_{i,d-1}x_i^{d-1}=v_{d-1},\\
\vdots \\
\sum_{i=1}^n k_{i,d-n+2}x_i^{d-n+2}=v_{d-n+2},\\
\sum_{i=1}^n x_i=1,
\end{cases}
\]
for generic $v_l$'s. Now, by Bezout's theorem, we get at most $d\cdot(d-1)\cdot\ldots\cdot (d-n+2)\cdot 1$ solutions. And this is the generic degree of this system because this bound can be reached for the specific instance of all $v_l=1$, and $k_{i,l} = \delta_{i, (d-l+1)}$ for all $i \in \{1, \dots, n\}$ and $l \in \{d, \dots, d-n+2\}$, where $\delta$ is the Kronecker delta. Indeed, for this instance of parameters we get 
\[
\begin{cases}
x_1^d=1,\\ 
x_2^{d-1}=1,\\
\vdots \\
x_{n-1}^{d-n+2}=1,\\
\sum_{i=1}^n x_i=1,
\end{cases}
\]
From the first equation, $x_1$ can be any of the $d^{th}$-roots of unity, from the second equation, $x_2$ can be any of the $(d-1)^{th}$-roots of unity, etc. Finally, from the last equation, $x_n$ can be determined uniquely.
To summarize this we have the following theorem.

\begin{theorem}\label{degree_of_Pk}
For generic values of the cumulants $k=(k_{ij})_{ij}$, the feasible portfolio variety $\mathcal{P}_k$ has degree \[d\cdot(d-1)\cdot\ldots\cdot(d-n+2).\]
\end{theorem}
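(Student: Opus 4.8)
The plan is to compute the degree as the number of points in which $\mathcal{P}_k$ meets a generic linear subspace of complementary dimension. Since $\dim \mathcal{P}_k = n-1$ by Proposition~\ref{dimofPk}, I would cut $\mathcal{P}_k$ with $n-1$ generic affine-linear equations $\sum_{j=1}^d c_{l,j} y_j = c_{l,0}$, pull them back through the parametrization $\phi$, and adjoin $\sum_{i=1}^n x_i = 1$. This produces the polynomial system~\eqref{degree_equ} in $x_1,\dots,x_n$, and (assuming $\phi$ is birational onto its image, which I would verify separately for generic $k$) the number of its solutions equals the degree we are after.

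The first obstacle, and the reason a naive application of B\'ezout to~\eqref{degree_equ} overshoots, is that all $n-1$ pulled-back equations share the same leading form $\sum_{i=1}^n k_{i,d} x_i^{d}$, since each is the very same linear combination of the power sums $v_j := \sum_{i=1}^n k_{ij} x_i^j$. Homogenizing and setting the extra variable to zero therefore imposes only the two conditions $\sum_i k_{i,d} x_i^{d} = 0$ and $\sum_i x_i = 0$ at infinity, whose common locus in $\mathbb{P}^{n-1}$ is generally positive-dimensional. The structural fact I would exploit to remove this excess is precisely that the degree-$j$ homogeneous part of every equation is a scalar multiple of the single form $v_j$: this lets me replace the equations by invertible scalar combinations, without altering the affine solution set, so as to cancel leading terms successively and make the $l$-th equation ($l=1,\dots,n-1$) acquire a distinct top degree $d-l+1$.

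After this echelon reduction the staggered leading degrees are $d, d-1, \dots, d-n+2$, together with the linear constraint. Introducing the quantities $v_l$ for $l=d,\dots,d-n+2$, the count reduces to solving the decoupled power-sum system
\[
\sum_{i=1}^n k_{i,d} x_i^{d} = v_d, \quad \dots, \quad \sum_{i=1}^n k_{i,d-n+2} x_i^{d-n+2} = v_{d-n+2}, \quad \sum_{i=1}^n x_i = 1
\]
for generic right-hand sides, and B\'ezout's theorem then gives the upper bound $d\cdot(d-1)\cdots(d-n+2)\cdot 1$. To see that this bound is actually attained, and that nothing is again lost at infinity, I would check directly that for generic $k$ the $n$ forms $\sum_i k_{i,l} x_i^{l}$ (for $l=d,\dots,d-n+2$) together with $\sum_i x_i$ have no common zero in $\mathbb{P}^{n-1}$ — there being one more form than the dimension, their common locus is generically empty — so that the affine solution count meets the B\'ezout bound, with all multiplicities equal to one.

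Finally I would exhibit a single instance realizing the bound, confirming sharpness of the generic degree: taking $k_{i,l} = \delta_{i,\,d-l+1}$ and all $v_l = 1$ collapses the system to $x_1^{d}=1,\ x_2^{d-1}=1,\ \dots,\ x_{n-1}^{d-n+2}=1$ together with $\sum_i x_i = 1$, whose solutions are obtained by letting each $x_i$ range over the appropriate roots of unity and then solving uniquely for $x_n$, giving exactly $d\cdot(d-1)\cdots(d-n+2)$ points. I expect the main difficulty to lie not in these B\'ezout counts but in the elimination step that tames the behaviour at infinity: justifying that the successive scalar combinations genuinely stagger the leading degrees (this rests on the proportionality of the power-sum parts) and that the reduced system then has empty intersection with the hyperplane at infinity for generic $k$.
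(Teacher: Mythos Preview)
Your proposal is correct and follows essentially the same route as the paper's own argument: pull back $n-1$ generic hyperplane equations through $\phi$, observe that the degree-$j$ homogeneous parts are all scalar multiples of the single power sum $v_j$, perform the successive scalar eliminations to stagger the leading degrees to $d,d-1,\dots,d-n+2$, reduce to the power-sum system in the $v_l$'s, apply B\'ezout for the upper bound, and certify sharpness via the very same instance $k_{i,l}=\delta_{i,\,d-l+1}$, $v_l=1$. Your additional remarks on verifying birationality of $\phi$ and emptiness of the common locus at infinity for generic $k$ are welcome care that the paper leaves implicit.
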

\begin{remark}[\textbf{Further directions}]
Beyond the necessity of testing these generalized portfolios to concrete real-world data, further theoretical aspects should be studied. Some of these are the study of more complex geometric and topological features of the feasible portfolio variety; understanding the special portfolios lying on the discriminant; results on the number of real solutions -- in view of the connected components of the complement of the discriminant; existence, and number of positive solutions; understanding solutions in $(0,1)$; etc.
\end{remark}

\end{document}